\documentclass[letterpaper]{article}

\usepackage{natbib,alifeconf}  

\usepackage{booktabs}
\usepackage{amsmath, amssymb} 
\usepackage{amsthm} 
\usepackage{amssymb} 
\usepackage{eqnarray}
\usepackage{url,hyperref,cleveref}
\usepackage{subfig}
\usepackage{mhchem}
\usepackage{xspace}

\newcommand{\Ms}{{\cal M}}
\newcommand{\Rs}{{\cal R}}

\newcommand{\Cr}{{\cal C} \mathord{\uparrow}}

\newcommand{\Ss}{{\bf S}}
\newcommand{\vs}{{\bf v}}
\newcommand{\xs}{{\bf x}}

\newcommand{\prods}{\mathrm{prod}}
\newcommand{\supp}{\mathrm{supp}}
\newcommand{\Ge}{G_{CL}}

\newcommand{\Csr}{{\cal C}{\uparrow}}

\theoremstyle{plain}

\newtheorem{lem}{Lemma}
\theoremstyle{definition}
\newtheorem{defi}{Definition}
\newtheorem{exmp}{Example}

\theoremstyle{remark}

%





\title{Emergent Complexity in Nuclear Reaction Networks: \\ A Study of Stellar Nucleosynthesis through Chemical Organization Theory}

\author{
    Pedro Maldonado-Lang$^{1,2}$,
    Cl\'ement Vidal$^{3}$
\\
    \mbox{}\\
    $^1$Institute for the Philosophy and Science of Complexity, Chile,
    $^2$Soluciones Analíticas SpA., Chile, pmaldona@sax.cl \\
    $^3$Vrije Universiteit Brussel, Belgium,
    contact@clemvidal.com
} 

%
%

\begin{document}

\maketitle
\begin{abstract}
We explore the emergence of complex structures within reaction networks, focusing on nuclear reaction networks relevant to stellar nucleosynthesis. The work presents a theoretical framework rooted in Chemical Organization Theory (COT) to characterize how stable, self-sustaining structures arise from the interactions of basic components. Key theoretical contributions include the formalization of \textit{atom sets} as fundamental reactive units and the concept of \textit{synergy} to describe the emergence of new reactions and species from the interaction of these units. The property of \textit{separability} is defined to distinguish dynamically coupled systems from those that can be decomposed. This framework is then applied to the \mbox{STARLIB} nuclear reaction network database, analyzing how network structure, particularly the formation and properties of atom sets and semi-self-maintaining sets, changes as a function of temperature. Results indicate that increasing temperature generally enhances network cohesion, leading to fewer, larger atom sets. Critical temperatures are identified where significant structural reorganizations occur, such as the merging of distinct clusters of atom sets and the disappearance of small, isolated reactive units. The analysis reveals \textit{core clusters} – large (containing more that 1000 reactions), semi-self-maintaining structures that appear to form the core of all potentially stable nucleosynthetic configurations at various temperatures. Overall, the paper provides insights into the structural underpinnings of stability and emergence in complex reaction networks, with specific implications for understanding stellar evolution and nucleosynthesis.
\end{abstract}

\section*{Introduction: Complex Systems and Reaction Networks}

All life and technology on Earth is based on the manipulation of electromagnetic and chemical forces. However, chemical reactions may be just one way to organize matter into complex, living systems. Until now, nuclear reactions have seemed raw and difficult to control because we only know how to manipulate them for creating power plants or bombs. However, their potential for creating complex, high-energy machines and organizations is completely unexplored. Can we identify reaction networks or assembly routes in nuclear physics that lead to emergent complexity analogous to complex molecular systems in chemistry? What conditions (temperature, density, reaction pathway control) are necessary for guiding nuclear reactions toward more complex products than simple fusion or fission products? To explore these difficult questions, we propose to start with the formalism of reaction networks, as they offer a powerful formalism for modeling complex systems, particularly where interactions involve the transformation of multiple entities, as seen in chemistry, biology, and nuclear astrophysics \citep{Dittrich2007ChemicalOT, Feinberg2019FoundationsOCR}. Stellar formation, for instance, can be viewed as an emergent phenomenon where a dynamically stable system arises from the complex interplay of particles and energy.

This work aims to elucidate mechanisms of organization that allow for the formation and persistence of such stable structures, focusing on nuclear reaction networks within stars. The approach leverages Chemical Organization Theory (COT), an algebraic framework designed for the structural analysis of large reaction networks where direct dynamical simulation is often intractable \citep{Dittrich2007ChemicalOT}.

COT posits that dynamically stable configurations in a reaction network correspond to ``organizations'' – sets of molecular species that are both \textbf{closed} and \textbf{self-maintained} \citep{Dittrich2007ChemicalOT}.
A set of species $X$ is \textit{closed} if all reactions possible with the species in $X$ only produce species that are also in $X$; essentially, no new species or reactions can emerge from within the set. The \textit{closure} of a set $X$, denoted $G_{CL}(X)$, is the smallest closed set containing $X$, formed by iteratively adding products of newly enabled reactions until no more new species are generated.
\textit{Self-maintenance} refers to the ability of a system to continuously regenerate its components through its internal reactions, ensuring that all consumed species are also produced within the set. An organization, being both closed and self-maintained, represents a configuration that can persist over time. Theoretical results link fixed points in the dynamics of a reaction network to such organizations \cite{peter2011relation}, making the structural identification of organizations a crucial step in predicting potential stable states.

This work extends these concepts by introducing finer-grained structural elements and interaction mechanisms, specifically \textit{atom}, \textit{monergy} and \textit{synergy}, to better understand how organizations form and evolve. We present a first study of temperature dependence in stellar nuclear reactions.

More broadly, the study of Artificial Nuclear Life may pave the way towards groundbreaking advances in nuclear physics, materials science, and high-energy astrophysics. It may also provide a theoretical foundation for putative high-energy lifeforms such as stellivores \citep{Vidal2016Stellivore, haqq-misra_projections_2025}.

We start by introducing fundamental concepts for the study of reaction networks in the first~\autoref{sec:Fund_RN}; then describe in the next~\autoref{sec:Close_reac_str} how higher-level structures can be characterized, in particular by formalizing the concepts of monergy and synergy. In the last~\autoref{sec:Stellar_RN}  we apply our framework to the \mbox{STARLIB} database of nuclear reactions. 

\section*{Fundamental Concepts in Reaction Networks} 
\label{sec:Fund_RN}

\subsection{Reaction Networks and Closure}

In order to properly understand the structural and dynamic properties of reaction networks, it is necessary to first grasp the formal foundations of what constitutes such a network. A reaction network consists of a set of species and a collection of interactions among them. These species do not necessarily represent only atoms, as traditionally conceived in chemical reactions; rather, they may also stand for molecules, ecological species, decisions, beliefs, or other abstract entities \citep{heylighen_chemical_2024}. Reactions, in this context, represent the transformation of one set of species into another.

Formally, a reaction network is defined by a finite set of species $\Ms = \{s_1, \dots, s_n\}$ and a finite set of reactions $\mathcal{R} = \{r_1, \dots, r_m\}$, where each reaction $r_i$ is written as:
\[
\sum_{j=1}^{n} a_{ij} s_j \rightarrow \sum_{j=1}^{n} b_{ij} s_j
\]
Here, the coefficients $a_{ij} \ge 0$ and $b_{ij} \ge 0$ are the stoichiometric coefficients corresponding to the supports and products of the reaction $r_i$, respectively. The species $s_j$ for which $a_{ij} > 0$ are the \emph{supports}, and those for which $b_{ij} > 0$ are the \emph{products}. These can be compactly represented by the sets $\supp(r_i) = \{s_j \in \Ms \mid a_{ij} > 0\}$ and $\prod(r_i) = \{s_j \in \Ms \mid b_{ij} > 0\}$, respectively.

\begin{defi}
A reaction network is defined as the pair $\langle \Ms,\Rs \rangle$.
\label{def:RN}
\end{defi}

Given a subset $X \subseteq \Ms$, we define the set of reactions associated to $X$ as:
\[
\Rs_X = \{r \in \Rs \mid \supp(r) \subseteq X \}.
\]

This definition allows us to introduce the notion of when a reaction is said to be \emph{triggerable} by a given set of species.

\begin{defi}
Let $r \in \Rs$; then $r$ is said to be \emph{triggerable} by $X$ if and only if $r \in \Rs_X$.
\label{def:trig}
\end{defi}

Given a set $X$, it is often the case that the products of reactions in $\Rs_X$ are not fully contained within $X$. This leads to the need for a broader notion: if all products of all reactions triggered by $X$ are already in $X$, i.e. $\prods(\Rs_X) \subseteq X$, then $X$ is said to be \emph{closed}.

Note that reactions of the form $\to s$, which in fact correspond to $\emptyset \to s$, where $s \in \Ms$, are always triggered since the empty set $\emptyset$ is present in every set. These reactions are considered inflows to the system, just as reactions of the form $s \to$ are outflows.

\begin{defi}
Let $X \subseteq \Ms$. The set $X$ is said to be \emph{closed} if and only if $\prods(\Rs_X) \subseteq X$.
\label{def:closed}
\end{defi}

If $X$ is not closed, then at least one species produced by a reaction triggered by $X$ lies outside of $X$. Therefore, we must consider an extended set $X \cup \prods(\Rs_X)$, which may still not be closed. This recursive process continues until no new species are generated. Since $\Ms$ is finite, the process necessarily converges to a closed set.

\begin{defi}
Let $X \subseteq \Ms$. The \emph{generated closure} $\Ge(X)$ is defined as the smallest closed set containing $X$.
\label{def:genclos}
\end{defi}

This closure is well defined and unique for each $X \subseteq \Ms$. Consequently, the collection of all closed subsets of $\Ms$ forms a meaningful subset of the power set of $\Ms$, and reflects the generative potential of reactive processes in the network.

\begin{defi}
The collection of all closed subsets $\mathcal{C}$ of a reaction network $\langle \Ms, \Rs \rangle$ is referred to as the \emph{structure of closed sets} of the network.
\label{def:enclosed}
\end{defi}

\subsection{Structural and dynamic stability }

Beyond closure, another essential property in reaction networks is \textit{semi-self-maintenance}. Although closure guarantees that no new species are generated outside a given set, semi-self-maintenance captures the internal balance of consumption and production within that set.

\begin{defi}
Let $X \subseteq \Ms$. The set $X$ is said to be \emph{semi-self-maintaining} if:
\[
\supp(\Rs_X) \subseteq \prods(\Rs_X)
\]
\label{def:semimaint}
\end{defi}

That is, every species consumed by a reaction triggered by $X$ is also produced by at least one of these reactions. This property ensures that there exists a generative structure within $X$ capable of replenishing any resource it consumes, even if this compensation is only qualitative. Therefore, semi-self-maintenance is a necessary structural condition for dynamic persistence, albeit insufficient to guarantee it. For example,consider the reaction set $\{a \rightarrow b,~2b \rightarrow a\}$. The species $\{a, b\}$ form a semi-self-maintaining set, since both consumed species are also produced. However, to the imbalance of producing only one $b$ while needing two to regenerate $a$—this set cannot be considered self-sustaining in a quantitative sense.

To rigorously analyze when quantitative balance is achieved, we introduce the notion of a \emph{process vector} $\vs$, whose entries indicate the frequency or rate at which each reaction occurs over a specified time interval, whether continuous or discrete~\citep{Veloz2017a}.

Let the state of a reaction network be represented by a vector $\xs \in \mathbb{R}^m_+$, where each entry $\xs[j]$ denotes the concentration (or count) of species $s_j$, for $j=1, \dots, m$. The stoichiometry of the network, describing how species are produced and consumed, can be captured using a \emph{stoichiometric matrix} $\mathbf{S} \in \mathbb{Z}^{m \times n}$, defined such that:
\[
\Ss[j,i] = b_{ij} - a_{ij}
\]
where $a_{ij}$ and $b_{ij}$ are the stoichiometric coefficients of the supports and products, respectively, of reaction $r_i$ with respect to species $s_j$.

Given an initial state $\xs$ and a process vector $\vs$, the updated state of the network after applying the reactions at the given rates is given by:
\begin{equation}
\label{st-ch}
\xs_\vs = \xs + \Ss \vs.
\end{equation}

\begin{defi}
A set $X$ is said to be \emph{self-maintaining} if there exists a process vector $\vs$ such that all reactions in $\Rs_X$ are active, i.e., $\vs[i] > 0$ for all $r_i \in \Rs_X$, and the resulting state satisfies $\xs_\vs[j] \geq \xs[j]$ for all $j = 1, \dots, m$.
\end{defi}

A self-maintaining set can thus support its internal dynamics without any net loss in species concentrations, ensuring full quantitative sustainability.

\begin{defi}
Let $X \subseteq \Ms$. Then $X$ is called a (semi-)organization if and only if it is both closed and (semi-)self-maintaining.
\end{defi}

Organizations are sets that neither generate new species (closure) nor suffer depletion (self-maintenance), thereby forming structurally stable configurations. These configurations serve as candidates for dynamically stable regimes in the system and have been shown to correspond to attractors such as fixed points~\citep{seminal}, periodic orbits~\citep{peter2011relation}, and limit cycles~\citep{peter2021linking}.

\section{Structure and Emergence of Reactive Sets}
\label{sec:Close_reac_str}
\subsection{Generating Closed Sets from Smaller Parts}

Dynamically relevant closed sets are those in which all constituent species participate in at least one reaction, either as supports or products. In such sets, every species is subject to the influence of a process vector, thus contributing to the internal dynamics of the system. However, certain species can be dynamically isolated, that is, neither the organization nor the set containing these species is capable of triggering the other. In such cases, the (partial or full) union of these dynamically disconnected species with an existing organization still results in a closed set, yet the added species remain dynamically inert: Their concentrations remain unaffected by the operative process. As a result, these additions are inconsequential for analyzing dynamic behavior within the closed set. Consequently, if the goal is to identify organizations, the problem can be reduced to the analysis of reactive closures alone.

\begin{defi}
Let $X \subset \Ms$. The set $X$ is called reactive if for every $x \in X$ there exists a reaction $r \in \Rs_X$ such that $x \in \prods(r) \cup \supp(r)$.
\end{defi}

\begin{defi}
Let $\Cr$ denote the collection of \textit{closed reactive sets} of the reaction network $\langle \Ms, \Rs \rangle$.
\end{defi}

As noted in \citep{maldonado_lang_2022_16896158,10.1007/978-3-030-19432-1_7_1}, the structures formed by reactive closed sets, as well as by closed sets themselves, satisfy a semi-lattice in which the join operation is given by the closure of the union. This perspective allows the search to be reformulated in terms of reactions rather than species, reducing the number of elements to combine and simplifying the exploration. By applying the union–closure operation to combinations of reactions, one can systematically generate all reactive closures of the network, thereby obtaining the full set of reactive subsets that characterize its structural dynamics.

\subsection{Atoms of the Closure Structure}
If we consider the set of species $X_r=G_{CL}(\text{supp}(r))$ for $r \in \Rs$, we notice that this corresponds to the smallest closed set containing the reaction $r$. We denote this as the {\it atom set} of $r$ \citep{10.1007/978-3-030-19432-1_7_1,maldonado_lang_2022_16896158,peter2023computing}.

Interestingly, we can define an equivalence relation:

\begin{defi}
Let $r_i,r_j \in \Rs$ we define the relation $\mathbf{R}$ as:
$$r_i\mathbf{R}r_j  \Leftrightarrow G_{CL}(\text{supp}(r_i))=G_{CL}(\text{supp}(r_j))$$
\label{def:rela_RN}
\end{defi}
the latter allow us to define an equivalence class:

\begin{defi}

Let $r \in \Rs$ We define the equivalence class of $r$ as $[r] \subset \Rs$ as:
$$ [r] = \{r_i \in \Rs | r_i \mathbf{R} r\} $$
\label{def:equi_class}
\end{defi}
The latter notions can be used to \textit{partition} sets of species. 

\begin{defi}

A \textit{partition} of a set $X$ corresponds to a collection of subsets $\mathbf{X}=\{X_1,X_2,\dots X_n\}$ such that:

\begin{enumerate}
    \item $X=\bigcup_i X_i$
    \item $X_i \cap X_j = \emptyset$ if $ j\neq i$
    \item $X_i \neq \emptyset$ for all $i$
\end{enumerate}
\label{def:partition}
\end{defi}

It is clear that if $x,y \in X$ such that $x\mathbf{R}y$ then $[x]=[y]$, hence equivalence classes of a relation always induce a partition of a set. 
By using the equivalence classes given by the relation \ref{def:rela_RN}, it is possible to define the equivalence classes of reactions:

\begin{defi}
Let $\langle \Ms, \Rs \rangle$ be a network of reactions. The equivalence relation \ref{def:equi_class} generates \textit{a partition} $\mathcal{P}$ such that:

$$ \mathcal{P}=\{[r] | r\in \Rs\} $$

\end{defi}

Since the set of reactions is finite, consequently so will be the partition. Thus, these can be numbered according to their corresponding equivalence class and considered as a collection:

\begin{equation}
    \mathcal{P}=\{P_1,\dots,P_n\}
\end{equation}

With this definition, it is now possible to define the \textit{atom} sets:

\begin{defi}

Let  $\langle \Ms, \Rs \rangle$  be a finite reaction network and $\mathcal{P}=\{P_1,\dots,P_n\}$  be the set of their respective equivalence classes, then we define the \textit{collection of atom sets} $\mathcal{B}$ as:

$$\mathcal{B}= \{B_i=G_{CL}(\supp(P_i)) | P_i \in \mathcal{P}\}$$

\label{def:atom}
\end{defi}

Atom sets arise from disjoint equivalence classes of reactions, yet they may themselves contain other atoms. This nesting leads to the cohesive structures observed in reaction networks (Lemma \ref{lem:monergy}). Accordingly, we define the \textit{level} of an atom as the degree to which it contains other atoms.

\begin{defi}
Let $\langle \Ms, \Rs \rangle$ be a finite reaction network and $B \in \mathcal{B}$ be a atom set. We define the $N_B$ level of $B$ as the number of equivalence classes contained in $B$:

$$N_B=|\{P_i | G_{CL}(\supp(P_i)) \subseteq B\}|$$
\label{def:level}
\end{defi}

To illustrate this, let us consider the following example:

\begin{exmp}

Let $\Rs=\{r_1,r_2,r_3\}$ and $\Ms=\{a,b,c\}$ such that:

\begin{align*}
    r_1: & \quad \emptyset \to a \\
    r_2: & \quad b \to 2b \\
    r_3: & \quad c+b \to 2c 
\end{align*}

Here, it can be noted that the reaction network can be described by three equivalence classes: $P_1=\{r_1\}$, $P_2=\{r_2\}$ and $P_3=\{r_3\}$, which generate the atoms $B_1=\{a\}$, $B_2=\{a,b\}$ and $B_3=\{a,b,c\}$, respectively. Thus, $B_1 \subset B_2 \subset B_3$. Moreover, the respective levels of the atoms correspond to $N_{B_1}=1,N_{B_2}=2$ and $N_{B_3}=3$. 

\end{exmp}

It should be noted, that all the atoms sets $\mathcal{B}$ cover $\Ms$, \textit{i.e.} $\bigcup_i B_i = \Ms, \quad \forall B_i \in \mathcal{B}$. This is because every reaction in $r \in \Rs$ is in at least one atom. Moreover, any reactive closed set $X \in \mathcal{C\!\uparrow}$ can be described as a union of atoms belonging to $\mathcal{B}$ \citep{10.1007/978-3-030-19432-1_7_1,maldonado_lang_2022_16896158}. More precisely: 

\begin{lem}
\label{lem:bas_const}
Let $\mathcal{B}$ be the collection of atoms of the finite reaction lattice $\langle \Ms,\Rs \rangle$.  Then, any $X$ in $\Csr$ is described as a union of atoms:

\begin{equation}
\mathrm{If}\,\, X \in \Csr \quad \mathrm{then} \ \quad X=\bigcup_i B_i \quad \mathrm{such}\,\mathrm{that} \quad B_i \in \mathcal{B}   
\label{eq:closreac}
\end{equation}
\end{lem}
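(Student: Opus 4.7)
The plan is to show the double inclusion $X = \bigcup_{r \in \Rs_X} B_{[r]}$, where $[r]$ denotes the equivalence class of $r$ under the relation $\mathbf{R}$ from Definition~\ref{def:rela_RN}, and then observe that this is a union of atom sets indexed by the classes $P_i$ that meet $\Rs_X$.

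For the inclusion $\bigcup_{r \in \Rs_X} B_{[r]} \subseteq X$, I would fix $r \in \Rs_X$. By Definition~\ref{def:trig}, $\supp(r) \subseteq X$. Since $X$ is closed (Definition~\ref{def:closed}), $X$ is itself a closed set containing $\supp(r)$. By the minimality clause in Definition~\ref{def:genclos}, $B_{[r]} = G_{CL}(\supp(r)) \subseteq X$. Taking the union over all $r \in \Rs_X$ yields the inclusion.

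For the reverse inclusion $X \subseteq \bigcup_{r \in \Rs_X} B_{[r]}$, I would pick an arbitrary $x \in X$. Here the reactivity hypothesis is essential: since $X \in \Csr$, there exists some $r \in \Rs_X$ with $x \in \supp(r) \cup \prods(r)$. In the case $x \in \supp(r)$, one directly has $x \in \supp(r) \subseteq G_{CL}(\supp(r)) = B_{[r]}$. In the case $x \in \prods(r)$, note that by construction $B_{[r]}$ is closed and contains $\supp(r)$, so $r \in \Rs_{B_{[r]}}$; closure of $B_{[r]}$ then forces $\prods(r) \subseteq B_{[r]}$, hence $x \in B_{[r]}$. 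Either way $x$ lies in the union, completing the second inclusion.

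Finally, I would collect terms by equivalence class: since $B_{[r]} = B_{[r']}$ whenever $r\mathbf{R}r'$, the union $\bigcup_{r \in \Rs_X} B_{[r]}$ coincides with $\bigcup_{i \in I} B_i$, where $I = \{i \mid P_i \cap \Rs_X \neq \emptyset\}$, giving the desired representation of $X$ as a union of elements of $\mathcal{B}$. I do not anticipate a serious obstacle here; the only subtlety worth emphasizing is that reactivity is used exactly once (and indispensably) in the second inclusion to guarantee that every species of $X$ is actually covered by the support or product of some triggered reaction, which is what allows atom sets to reach it.
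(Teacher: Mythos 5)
Your proof is correct and takes essentially the same route as the paper's: both rest on the identity $X=\bigcup_{r\in\Rs_X} G_{CL}(\supp(r))$ together with the observation that each term is an element of $\mathcal{B}$. The paper simply asserts that identity, whereas you supply the double inclusion (minimality of the generated closure inside the closed set $X$ for one direction, reactivity for the other), so yours is a fleshed-out version of the same argument.
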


\begin{proof}
Since $X \in \Csr$, it follows that $\mbox{$X=\bigcup G_{CL}(\text{supp}(r))$}$$, \, \forall r \in \Rs_X$. Then, given the definition \ref{def:atom} it is known that $G_{CL}(\text{supp}(r)) \in B$.
\end{proof}

This lemma is particularly significant because it shows that atoms provide a constructive foundation for all reactive closed sets. 

\subsection{Synergy and monergy}
It is important to recall that atoms are not characterized by the particular species they contain, but rather by the equivalence classes they define, i.e. the sets of reactions that generate the same reactive closed sets. These equivalence classes encapsulate the structural roles species play within the reaction network. When two of such equivalence classes (or closed sets) are subjected to the union-closure operation, new reactions may emerge that are absent in the individual components. This phenomenon, known as \textit{synergy}, reveals how novel reactive structures can arise from collective interaction. Notably, synergy is not restricted to closed sets: it can also emerge among non-closed subsets, highlighting the structural potential for new closures induced by latent reaction pathways. This property reflects the dynamic richness of the network, enabling the emergence of behaviors that transcend the simple aggregation of its parts.

To formalize this, we introduce the concept of \textit{synergy} as follows:

\begin{defi}
Let $\mathbf{X} = {X_1, \ldots, X_k}$ be a collection of reactive closed sets, i.e. $X_i \in \Csr$. Then, $\mathbf{X}$ is said to form a \textit{synergy} if and only if:
\begin{enumerate}
\item $\bigcup_{i=1}^k \Rs_{X_i} \subset \Rs_{\mathbf{X}}$
\item For every proper subcollection $\mathbf{X’} \subset \mathbf{X}$, we have $\bigcup_{X \in \mathbf{X’}} \Rs_X = \Rs_{\mathbf{X’}}$
\end{enumerate}
\label{def:sinergy}
\end{defi}

Condition (1) captures the emergence of \emph{new} reactions, those triggered exclusively by the union of all sets in $\mathbf{X}$ and not by any of the subsets individually. Condition (2) ensures that this emergence is genuinely collective: no proper sub-collection of $\mathbf{X}$ independently satisfies the same property, which confirms that synergy depends on the full participation of all components.

Beyond synergy, it is essential to examine a more localized form of emergence. The following result introduces a concept referred to as \textit{monergy} \citep{10.1007/978-3-030-19432-1_7_1,maldonado_lang_2022_16896158}, which describes the emergence of structure from a single reaction. Specifically, if an atom (i.e., a minimal reactive closed set generated from the support of a single reaction) contains one or more nested atoms, and these are all hierarchically ordered by inclusion, then the closure of a single reaction is sufficient to regenerate the entire set. This stands in contrast to synergy, which requires multiple interacting parts.

\begin{lem}
Let $X \in \Cr$ be such that, for every pair of distinct reactive closed sets $X_1 \neq X_2$ contained in $X$, either $X_1 \subset X_2$ or $X_2 \subset X_1$ holds. Then, there exists a reaction $r \in \Rs_X$ such that $X = \Ge(\supp(r))$.
\label{lem:monergy}
\end{lem}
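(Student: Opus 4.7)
The plan is to use Lemma~\ref{lem:bas_const} to decompose $X$ into atoms, invoke the chain hypothesis to show the decomposition has a maximum element that already equals $X$, and then extract a single reaction from the corresponding equivalence class.

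First, I would write $X = \bigcup_{B \in \mathcal{B}_X} B$ via Lemma~\ref{lem:bas_const}, where $\mathcal{B}_X = \{B \in \mathcal{B} \mid B \subseteq X\}$, and verify that the hypothesis of the lemma actually applies to these $B$. Closedness of $B = \Ge(\supp(P))$ is immediate from the definition of $\Ge$. Reactivity also holds: every $x \in B$ either belongs to $\supp(P) \subseteq \supp(\Rs_B)$, or was introduced during the closure iteration as a product of a reaction whose support was already present, hence $x \in \prods(\Rs_B)$. Therefore each atom $B$ is itself in $\Cr$, and the chain hypothesis on reactive closed sets contained in $X$ applies to pairs drawn from $\mathcal{B}_X$.

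Second, I would note that $\mathcal{B}_X$ is finite (since $\Rs$ is finite and hence so is $\mathcal{B}$) and totally ordered by $\subset$. A finite chain admits a maximum, say $B^\ast \in \mathcal{B}_X$, so $X = \bigcup_{B \in \mathcal{B}_X} B = B^\ast$. Writing $B^\ast = \Ge(\supp(P^\ast))$ for the corresponding equivalence class $P^\ast$, it remains to replace the union-of-supports generator $\supp(P^\ast)$ by a single reaction. For any $r \in P^\ast$, Definition~\ref{def:rela_RN} guarantees that $\Ge(\supp(r))$ is the same set $B'$ for every representative. From $\supp(r) \subseteq \supp(P^\ast)$ we get $B' \subseteq B^\ast$; from $\supp(P^\ast) = \bigcup_{r \in P^\ast} \supp(r) \subseteq B'$ we get $B^\ast \subseteq B'$. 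Hence $\Ge(\supp(r)) = B^\ast = X$ for any such $r$, delivering the required reaction.

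The main obstacle I anticipate is precisely this last bookkeeping step: the atom generator in Definition~\ref{def:atom} is the closure of a \emph{union} of supports across an entire equivalence class, whereas the lemma demands a \emph{single} reaction. The two-sided containment above, driven by the defining property of $\mathbf{R}$, is what forces these two constructions to coincide; beyond this, every other step is just unpacking definitions. A minor subsidiary point is the justification that atoms are reactive (and not merely closed), without which the total-order hypothesis could not be transferred to $\mathcal{B}_X$.
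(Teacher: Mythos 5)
Your proof is correct, and it reaches the conclusion by a genuinely more complete route than the paper's. The paper argues by contradiction: it assumes no single reaction generates $X$, takes a generating set $\Rs'$ of several reactions, and derives a contradiction only in the case where two of the resulting closures $X_a = \Ge(\supp(r_a))$ and $X_b = \Ge(\supp(r_b))$ are incomparable; it never spells out why the remaining case (all such closures comparable) also forces a single generator, which is precisely where the real content lies. Your direct argument supplies that missing step: decompose $X$ via Lemma~\ref{lem:bas_const} into the atoms it contains, observe that these are reactive closed subsets of $X$ and hence totally ordered by the hypothesis, and take the maximum $B^\ast$ of this finite chain, which must therefore equal the whole union $X$. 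Your final bookkeeping step, showing $\Ge(\supp(P^\ast)) = \Ge(\supp(r))$ for any representative $r \in P^\ast$ by the two-sided containment coming from Definition~\ref{def:rela_RN}, is also a worthwhile addition the paper glosses over, since Definition~\ref{def:atom} generates atoms from the union of supports over a whole equivalence class while the lemma asks for one reaction. The only caveats are minor: the argument needs $\Rs_X \neq \emptyset$ (equivalently $X$ nonempty and reactive) so that the chain of atoms is nonempty, and your verification that atoms are reactive quietly uses the identity $\Ge(\supp(r)) = B$ for $r \in P$ before you formally establish it at the end; both are trivially repaired by reordering.
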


\begin{proof}
Given that all reactive closed sets within $X$ are ordered by strict inclusion, we consider the collection $\mathbf{X} = {X_1, \dots, X_l = X}$ as a chain of nested subsets such that $X_i \subset X_{i+1}$ for all $i \in {1, \dots, l-1}$. Suppose, by contradiction, that there exists no reaction $r \in \Rs$ such that $\Ge(\supp(r)) = X$. Then there must exist a set of reactions $\Rs’ \subset \Rs_X$ whose support species collectively generates $X$ through closure. Let $r_a, r_b \in \Rs’$ be two such reactions that generate $X_a = \Ge(\supp(r_a))$ and $X_b = \Ge(\supp(r_b))$, respectively. If neither $X_a \subset X_b$ nor $X_b \subset X_a$ holds, this contradicts the assumption that all sets in $\mathbf{X}$ are totally ordered by inclusion. Therefore, our assumption must be false and there must exist a single reaction whose support species generates $X$.
\end{proof}

\begin{exmp}
Let us consider the following reaction network:

\begin{align*}
 r1: & \quad a+b \to c \\
 r2: & \quad c+d \to a\\
 r3: & \quad c+2d \to b \\
\end{align*}

Here we can identify two reactive closed sets: 
$X_1=\{a,b,c\}=\Ge(\supp(r_1))$ and 
$X_2=\{a,b,c,d\}=\Ge(\supp(r_2))=\Ge(\supp(r_3))=X$, 
which corresponds to the entire network. Clearly, $X_1 \subset X_2$, and therefore either $r_2$ or $r_3$ triggers the whole network through closure. 

Now, let us consider the same network with a slight modification:

\begin{align*}
 r1: & \quad a+b \to c \\
 r2: & \quad c+d \to a \\
 r3: & \quad e+2d \to b \\
\end{align*}

Consequently, its structure changes. In this case, the closed sets are 
$X_1=\{a,b,c\}=\Ge(\supp(r_1))$, 
$X_2=\{a,c,d\}=\Ge(\supp(r_2))$, 
$X_3=\{b,d,e\}=\Ge(\supp(r_3))$, 
and finally 
$X_4=\{a,b,c,d,e\}=\Ge(\supp(r_2) \cup \supp(r_3))=X$, 
which corresponds to the set containing the entire network. 

In this case, there is no $r \in \Rs_X$ that generates the whole network; rather, two reactions together generate the entire set. Moreover, we have that $X_2 \nsubseteq X_3$ and $X_3 \nsubseteq X_2$.
\label{ex:monergy}
\end{exmp}

This result emphasizes that certain hierarchical structures within the network can be fully induced by a single generative event, which we define as \textit{monergic emergence}. Such structures reflect a compact form of organizational complexity, distinct from the distributed interaction required for synergy.

\section*{Analysis of Stellar Nuclear Reaction Networks}
\label{sec:Stellar_RN}
Given that our objective is to identify \textit{nucleosynthetic units} \citep{burbidge1957synthesis}, our approach focuses on the detection of subsets of nuclear species that are capable of maintaining their internal dynamics, which qualify as \emph{organizations}. To this end, we used the \mbox{STARLIB} nuclear reaction library, which contains approximately 75,000 reactions involving approximately 7,000 distinct nuclear species and particles \citep{sallaska2013starlib}. These reactions span a wide variety of nuclear processes relevant to astrophysical and laboratory contexts, as summarized in table \ref{tab:nuclear-reactions}.

\begin{table*}[t]
    \centering
    \begin{small}
    \begin{tabular}{|c|c|c|}
        \hline
        Type & Reaction & Description and/or example \\ \hline
        (1,1) & $s_1  \to s_2$ & $\beta$-decay or electron/neutron capture: $\ce{^{13}N} \to  \ce{^{13}C} + \ce{e^{+}} + \nu_{\ce{e}}$ \\
        (1,2) & $s_1  \to s_2 + s_3$ & Photodisintegration: $\gamma + \ce{^{28}Si} \to  \ce{^{24}Mg} + \alpha$ \\
        (1,3) & $s_1  \to s_2 + s_3 + s_4$ & Example: $\ce{^{12}C} \to  3\alpha$ (inverse triple-$\alpha$ process) \\
        (2,1) & $s_1 + s_2  \to s_3$ & Proton or neutron capture: $\ce{^{12}C} + p \to  \ce{^{13}N} + \gamma$ \\
        (2,2) & $s_1 + s_2  \to s_3 + s_4$ & Exchange reactions: $\ce{^{15}N} + p \to  \ce{^{12}C} + \ce{^{4}He}$ \\
        (2,3) & $s_1 + s_2  \to s_3 + s_4 + s_5$ & Example: $\ce{^{2}H} + \ce{^{7}Be} \to  p + 2\alpha$ \\
        (2,4) & $s_1 + s_2  \to s_3 + s_4 + s_5 + s_6$ & Example: $\ce{^{3}H} + \ce{^{7}Be} \to  2p + 2\alpha$ \\
        (3,1) & $s_1 + s_2 + s_3  \to s_4$ & Effective three-body collision:  $3\alpha \to \ce{^{12}C}$ \\
        (3,2) & $s_1 + s_2 + s_3  \to s_4 + s_5$ & Effective three-body collision:  $3\alpha \to p + \ce{^{11}B}$ \\
        (4,2) & $s_1 + s_2 + s_3 + s_4  \to s_5 + s_6$ & Effective four-body collision:  $2n + 2\alpha \to \ce{^{7}Li} + \ce{^{3}H}$ \\
        \hline
    \end{tabular}
    \end{small}
    \caption{Classification of nuclear reactions included in \mbox{STARLIB} by the number of supports and products.}
    \label{tab:nuclear-reactions}
\end{table*}

To analyze this extensive dataset, we developed a custom computational library \texttt{pyRN} \cite{pmaldona_pyRN_2023} capable of extracting the structure of reactive closed sets from a given reaction network and determining whether these sets are semi-self-maintaining or constitute full organizations. However, identifying the complete structure of reactive closures is computationally intensive: it is an intrinsically exponential problem that becomes even more demanding when combined with the optimization required to verify self-maintenance. For large-scale networks such as \mbox{STARLIB}, a comprehensive computation of all organizations is therefore infeasible.

To address this challenge, our study adopts a more tractable strategy by focusing exclusively on the identification of \textit{atoms}—that is, the minimal reactive closed sets generated by the closure of individual reactions. We then evaluate whether these atomic sets fulfill the condition of semi-self-maintenance. This targeted approach allows us to significantly reduce computational complexity while still uncovering meaningful candidates for nucleosynthetic units.

It is important to note that this network does not explicitly represent photons as species. Consequently, temperature-dependent processes are modeled as part of an external thermal environment (or thermal bath) in which the reactions are immersed \citep{iliadis2015nuclear}. This design choice allows the network to focus on the intrinsic reaction pathways, while treating thermal effects parametrically.

\subsection*{Temperature Effects on Network Size and atom Sets}

The \mbox{STARLIB} dataset includes activation information for each reaction, enabling us to track which reactions become viable as the temperature increases (typically in Gigakelvin, GK). Notably, the activation of reactions follows an exponential trend with respect to temperature, a behavior that is also reflected by the increasing number of species involved in the activated reactions. This thermally driven expansion of reactive pathways provides a valuable framework for analyzing the progressive emergence of complex nucleosynthetic structures as a function of environmental conditions.

This increase in the number of reactions results in greater network cohesion, an effect clearly observable through the decrease in the number of atoms as temperature rises. However, this metric alone may not adequately capture the structural cohesion of the network. Recall that the level of a closed set (Definition~\ref{def:level}) is defined as the number of atoms contained within it, including the set itself if it constitutes an atom. Consequently, network cohesion is zero when there are no connections among reactions, specifically when no reaction can trigger another. In such an extreme scenario, the network is composed entirely of isolated, disconnected reactions, and thus the number of atoms equals the total number of reactions. At the opposite extreme, cohesion equals one when the network is fully interconnected, with each reaction capable of activating others, resulting in a single atom encompassing the entire network.

As the temperature in the nuclear reaction network increases, the total number of atoms decreases, while the remaining atoms tend to be larger. This reduction in the number of atoms is caused by the emergence of new reactions activated by increasing temperature. The appearance of these additional reactions facilitates greater interconnectivity, thus enhancing structural cohesion.

To formally quantify this cohesive property, we define the cohesion $h$ of a network as follows:

\begin{defi}
The cohesion $h$ of a network is defined as one minus the ratio of the number of atoms $|\mathcal{B}|$ to the total number of reactions $|\Rs|$, that is:
$$h = 1 - \frac{|\mathcal{B}|}{|\Rs|}.$$
\label{def:cohesion}
\end{defi}

\begin{figure}[h]
    \centering
    \includegraphics[width=0.4\textwidth]{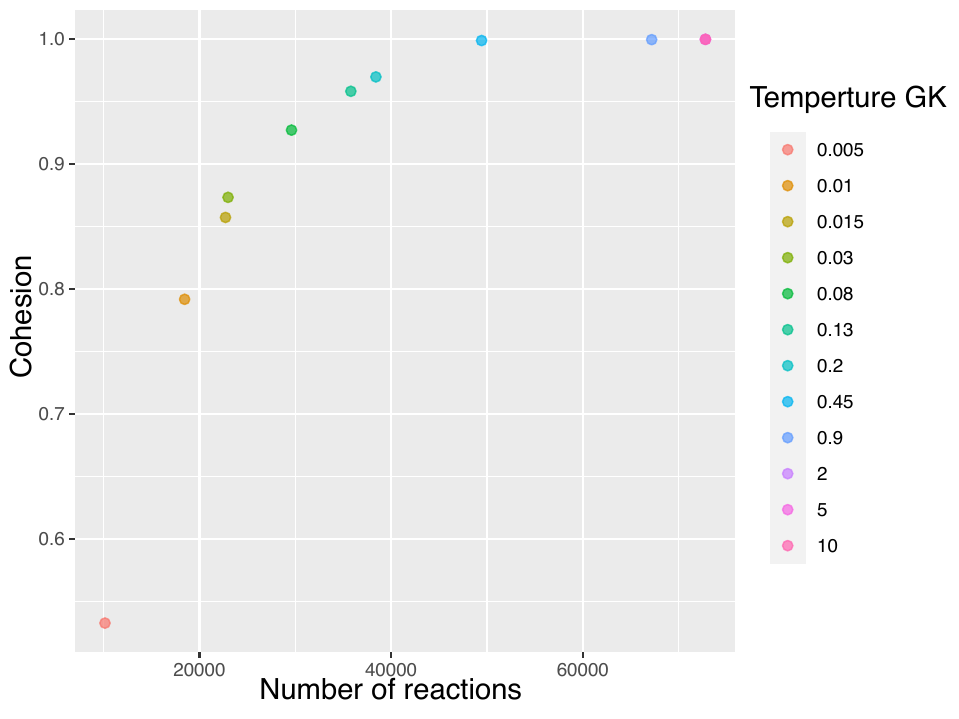}
    \caption{Network cohesion as a function of the number of reactions. The color of the points indicates temperature.}
    \label{fig:nucl_b_coh}
\end{figure}

Thus, as temperature increases and more reactions become activated, the overall cohesion of the network also rises. This trend is clearly illustrated in Figure~\ref{fig:nucl_b_coh}, which shows that once the critical temperature is reached, i.e., $T_c = 0.45$~GK, all networks above this threshold achieve a cohesion value close to one. 

\subsection{Proportion of Semi-Self-Maintaining Atoms}\label{sec:nucl_prop_ssm}

Semi-self-maintaining atoms constitute a small fraction of the total number of atoms; in fact, they represent approximately $3.52\%$ of all atoms. However, this proportion does not remain constant as the temperature changes.

\begin{figure}[h]
    \centering
    \includegraphics[width=0.4\textwidth]{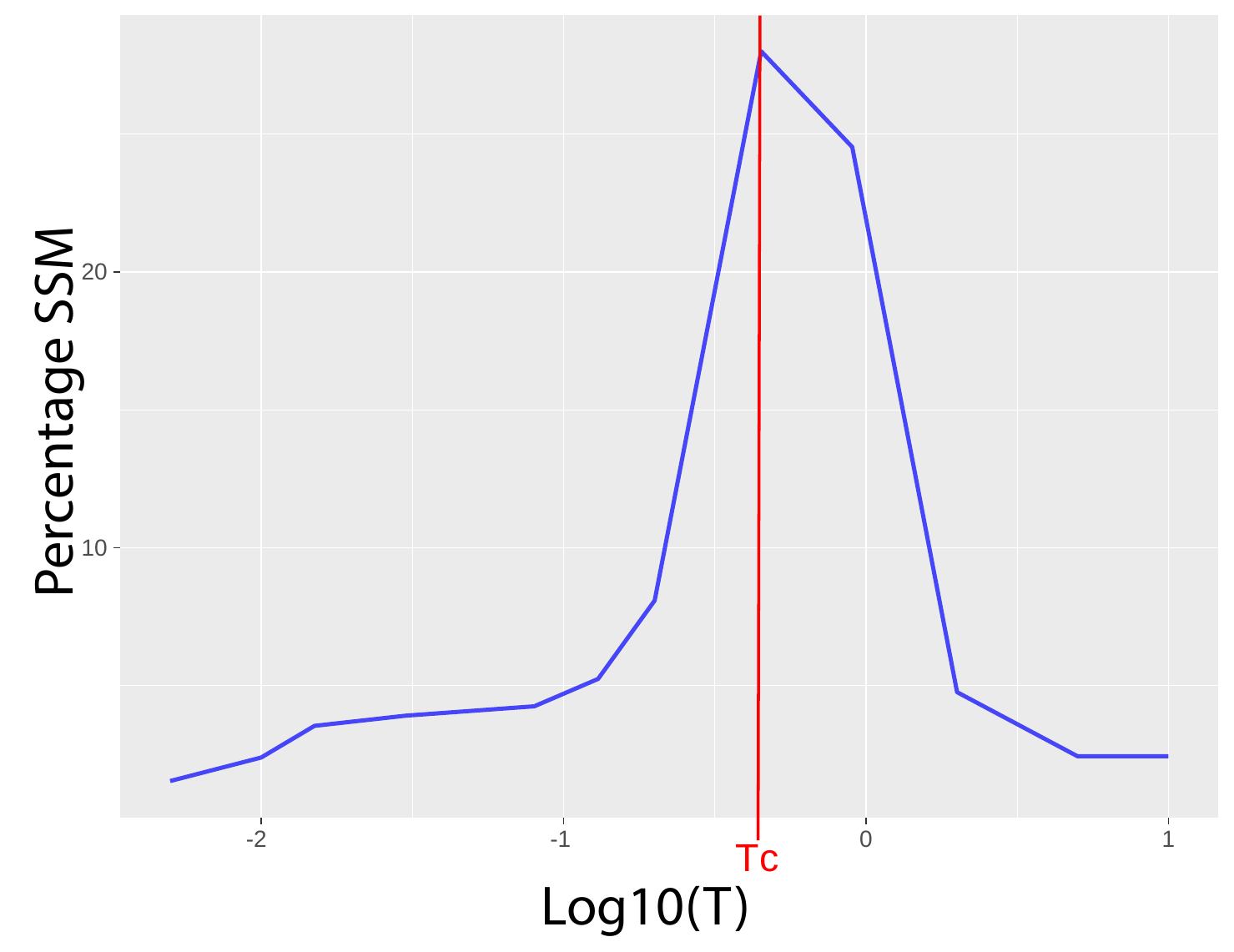}
    \caption{Percentage of semi-self-maintaining (SSM) atoms relative to the total number of atoms for networks at different temperatures.}
    \label{fig:nucl_ssm-ratio}
\end{figure}

Figure~\ref{fig:nucl_ssm-ratio} illustrates a temperature range between approximately 0.45--0.9~GK, within which the proportion of semi-self-maintaining atoms is maximized, reaching around $27\%$. This increase is closely related to the cohesive character of the network. In particular, at $T_c=0.45$~GK, smaller atoms begin to disappear, leading to the initial stages of network cohesion. This structural change improves the probability of identifying potential stable constitutive units. Because the network at these intermediate temperatures is not yet fully cohesive, more atoms of larger size are present. As will be shown later, an increase in the atom size is correlated with a greater probability of being semi-self-maintaining. Beyond 0.9~GK, this property diminishes significantly. At these higher temperatures, the network becomes highly cohesive and consists of only around 40 atoms. Among these, only one atom remains semi-self-maintaining and is contained within all other atoms in the network.

\subsection{Core clusters}\label{sec:nucl_basic_f}

We acknowledge in the previous section that an atom containing other atoms is known as monergy (Lemma~\ref{lem:monergy}). This property is consistently observed in all semi-self-maintaining sets across all temperatures. A equivalence class $P_i$ represents reactions whose closure generates the respective atom $B_i$. A distinction can be made between monergic atoms (with level greater than 1) and basal atoms (with level equal to 1) by analyzing the reactions within their respective equivalence classes. For basal atoms, the number of reactions within the equivalence class is equal to the number of reactions of the atom itself. In contrast, for monergic atoms, the number of reactions in the atom exceeds the number of reactions in the equivalence class that generate it:

\begin{equation}
|\mathcal{P}_i| = \left|\Rs_{\mathcal{B}i}\right| - \left| \bigcup_j \Rs_{\mathcal{B}_j} \right| \quad \text{such that} \quad \mathcal{B}_j \subset \mathcal{B}_i.
\end{equation}

\begin{figure}[h]
\centering
\includegraphics[width=0.5\textwidth]{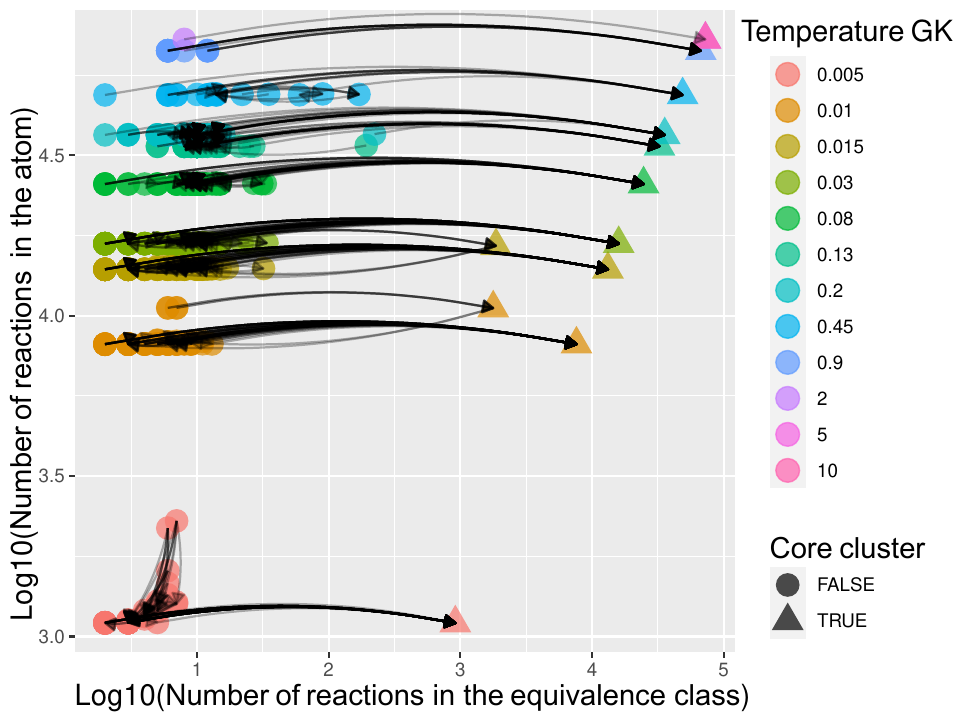}
\caption{Atoms that are semi-self-maintaining and large in size, i.e., containing more than 1100 reactions, are shown. The figure presents the number of reactions in each atom versus the number of reactions in its generating equivalence class, both plotted on a $\mathrm{Log}_{10}$ scale. Core clusters are marked with triangles. Arrows indicate containment relationships, showing which smaller atom is structurally included within each larger one. Notably, every core cluster is contained within all large semi-self-maintaining atoms.}
\label{fig:nucl_bc}
\end{figure}

The possibility of stable nuclear dynamics is exclusively attributed to large semi-self-maintaining atoms. In contrast, small-sized atoms are considered dynamically irrelevant due to their inability to support stable organizational behavior. 
These sets differ in that they involve only reactions of type $(1,1)$ (Table \ref{tab:nuclear-reactions}), consisting of a single support and a single product. Such cases can be represented as directed graphs, where nodes correspond to species, and edges indicate their transformation into others through reactions. In this context, the condition of self-maintenance is satisfied only if all species in the graph belong to closed cycles. This requirement is rarely fulfilled and has been observed in only one case (Equation \ref{eq:nucl_b_ssm}), occurring exclusively at the low temperature of 0.005GK.

\begin{equation}
\begin{gathered}
    r_1: \ce{^{100}Mo} \to \ce{^{100}Te} \\
    r_2: \ce{^{100}Te} \to \ce{^{100}Mo} \\
\end{gathered}
\label{eq:nucl_b_ssm}
\end{equation}

Therefore, the plausibility of dynamic stability is assigned only to large atoms, which contain at least 1100 reactions—differing by approximately three orders of magnitude from the largest among the small-sized atoms. These large semi-self-maintaining atoms vary in the number of reactions present in their generating equivalence classes. As illustrated in Figure~\ref{fig:nucl_bc}, two distinct groups can be identified: those whose equivalence classes contain more than 900 reactions and those with fewer. Atoms with more than 900 reactions are referred to as \textit{core clusters}.

Typically, these atoms are unique per temperature; however, in the temperature range between 0.01 and 0.015GK, two core clusters per temperature are observed (Figure \ref{fig:nucl_bc}). Interestingly, this interval coincides with the formation of two separate clusters, which subsequently merge from 0.030 GK onward. For temperatures equal to or greater than this value, only one core cluster is observed, aligning with the general behavior of all atoms that form a single cluster beyond this temperature. This indicates that core cluster govern the clustering structure of nucleosynthetic units. Each cluster is anchored by a core cluster; every atom within a cluster is a monergy that contains a core cluster. Non-core clusters connect to a core cluster by means of a hierachical chain of atoms (succesive monergies), and the lower bound corresponds precisely to a core cluster. Importantly, all core clusters are semi-self-maintaining and thus potentially stable. This result highlights that the configuration and structure of nucleosynthetic units strongly relate to their inherent stability.

\section{Conclusion}
This paper has extended Chemical Organization Theory (COT) to probe the emergence of complexity and stability within nuclear reaction networks, specifically focusing on stellar nucleosynthesis reactions available in the \mbox{STARLIB} dataset. We introduced and formalized novel structural concepts, including \textit{atoms} as minimal reactive closed sets, alongside \textit{synergy} and \textit{monergy}, to better characterize the mechanisms by which organizations form and persist in these high-energy environments.

Our computational analysis revealed significant temperature-dependent structural dynamics. We demonstrated that network cohesion increases with temperature, and identified an optimal temperature range (approximately 0.45-0.9 GK) where the proportion of semi-self-maintaining atoms is maximized. The central contribution of this work is the identification and characterization of core clusters. These are large, semi-self-maintaining, and inherently monergic structures that exhibit intrinsic stability. Our findings indicate that these core clusters act as core organizational units, structuring other nucleosynthetic assemblages through hierarchical containment. 

We noted that being an organization is a necessary but not a sufficient condition for a subsystem of species to be stable. Therefore, the COT framework aims to identify all possible subsets that could be stable, using a lower computational cost than differential equations approaches such as the one in \citep{wakelam20152014_1}. Instead, by considering all known possible nuclear interactions, we can determine which nuclear synthetic units could plausibly be candidates in terms of stability.

In future work, the evolutionary dynamics of the reaction networks can be studied thanks to a methodology co-developed by one of the authors  \citep{VelozHegeleMaldonado2023}. Starting from a sub-organization and perturbing it through the introduction or removal of species, it becomes possible to study the potential evolutionary trajectories of sub-organizations. To this end, the hierarchical structure of sub-organizations (represented by semi-lattice structure) can be analyzed, and the probabilities of transition among them can be studied. Through an analysis of the transition probabilities across the states of this structure, phenomena such as coexistence and competition can be observed: coexistence when states at the same level present similar transition probabilities between them, and competition when transitions favor one state over another. This approach also highlights the preferred evolutionary pathway toward increasing complexity under perturbations.

Another future avenue of research is to study the possible emergence of nuclear autocatalysis. In the theory of the origin of life developed \citep{williamson2024autocatalytic,xavier2020autocatalytic,peng2022hierarchical} , the emergence of catalysts along the evolutionary pathway plays a crucial role in enabling systems to maintain themselves. We can hypothesize that such models may also apply in the context of nuclear reactions, and not only chemical reactions. 

This study only claims preliminary results that demonstrate the existence of complexity in nuclear reactions. In essence, this study provides a novel framework and initial evidence for understanding how stable, complex structures can emerge from the seemingly chaotic interplay of nuclear reactions. These results offer new perspectives on the principles of self-organization in extreme astrophysical environments and lay foundational groundwork for further exploration into the potential for `Artificial Nuclear Life' and the assembly of complex, high-energy systems.

\footnotesize
\bibliographystyle{apalike}
\bibliography{example} 

\end{document}